\newtheorem{definition}{Definition}
\newtheorem{theorem}{Theorem}
\title{MTTR-A: Measuring Cognitive Recovery Latency in Multi-Agent Systems}
\author{
Barak Or,~\IEEEmembership{Member, IEEE}
\IEEEcompsocitemizethanks{
\IEEEcompsocthanksitem Barak Or is with the Office of the CEO, MetaOr Artificial Intelligence,
Haifa 3349602, Israel, and also with the Google--Reichman Tech School,
Reichman University, Herzliya 4610101, Israel.
\protect\\
E-mail: barakorr@gmail.com
}
}
\begin{document}
\maketitle

% ================= Abstract =================
\begin{abstract}
Reliability in multi-agent systems (MAS) built on large language models is increasingly limited by cognitive failures rather than infrastructure faults. Existing runtime monitoring tools describe failures but do not quantify how quickly distributed reasoning recovers once coherence is lost \cite{bhaskhar2023explainable}. We introduce MTTR-A (Mean Time-to-Recovery for Agentic Systems), a runtime reliability metric that measures cognitive recovery latency in MAS. MTTR-A adapts classical dependability theory to agentic orchestration, capturing the time required to detect reasoning drift and restore coherent operation. We further define complementary metrics, including MTBF and a normalized recovery ratio (NRR), and establish theoretical bounds linking recovery latency to long-run cognitive uptime. Using a LangGraph-based benchmark with simulated drift and reflex recovery, we empirically demonstrate measurable recovery behavior across multiple reflex strategies. This work establishes a quantitative foundation for runtime cognitive dependability in distributed agentic systems. More broadly, MTTR-A provides a system-level reliability metric applicable to distributed AI systems operating under partial observability, decentralized decision-making, and non-deterministic reasoning dynamics.
\end{abstract}

\begin{IEEEkeywords}
Multi-agent systems, cognitive reliability, agentic orchestration, runtime stability, fault tolerance.
\end{IEEEkeywords}

\IEEEpeerreviewmaketitle

\section{Introduction}
Classical system engineering defines reliability through metrics such as availability, mean time between failures (MTBF), and mean time to recovery (MTTR) \cite{bass2020sre}.
These indicators were developed for deterministic infrastructures-servers, networks, or control loops-where recovery denotes restoring a failed component to service. 
In contrast, failures in modern \emph{agentic} AI systems are often cognitive: agents may continue running while reasoning drifts, memories desynchronize, or coordination deteriorates.

As multi-agent systems (MAS) built on large language models (LLMs) become increasingly autonomous and interconnected, such cognitive faults cannot be captured by conventional runtime monitoring or uptime metrics. Recent surveys highlight similar reliability and coordination challenges in LLM-based MAS, underscoring the need for runtime control and recovery mechanisms \cite{han2025llmmaschallenges,guo2024llmmasurvey}.

From an AI systems perspective, this raises a foundational question: how should reliability be defined and measured for distributed cognitive systems whose failures are semantic and coordination-driven rather than infrastructural \cite{rawal2021recent}?

A system may remain functionally “up” while its reasoning has diverged or its safety policy degraded. 
This exposes a fundamental gap: how can we measure \emph{recovery} not of infrastructure, but of cognition itself?

To our knowledge, no prior multi-agent AI system defines a quantitative runtime metric for
cognitive recovery; existing approaches focus on runtime monitoring or schema enforcement, not on
recovery latency itself \cite{sun2024openagents}.
Industry reports concentrate on model capability and alignment rather than orchestration-level recovery, 
highlighting the absence of runtime reliability metrics in production AI systems 
\cite{bai2022constitutional}.

\paragraph{Motivation for System Level Metricization:}
In realistic deployments, cognitive or coordination failures are inevitable. 
Rather than analyzing each agent in isolation, reliability must be evaluated at the level of the entire orchestration -the MAS as an interacting intelligence network. 
We therefore adapt the classical MTTR metric into an orchestration-level cognitive context, defining \textbf{MTTR-A (Mean Time-to-Recovery for Agentic Systems)} as a measure of how long a distributed reasoning system takes to detect drift and restore coherence.
This perspective treats the multi-agent architecture as a single cognitive organism whose communication and recovery pathways determine runtime resilience \cite{varga2020resilient}. 
The practical objective, as in classical dependability engineering, is to \emph{minimize} MTTR-A through improved synchronization, reflex coordination, and runtime control policies.

Real-world MAS increasingly operate in environments where recovery latency determines safety, performance, and stability. 
In autonomous vehicle fleets, the agents are self-driving vehicles that communicate through vehicle-to-vehicle and infrastructure networks; even a two-second delay in recovering from a coordination or perception fault can trigger cascading collisions or traffic paralysis. 
In aerial drone swarms, the agents are autonomous UAVs performing collaborative missions such as mapping or rescue; prolonged recovery from signal loss or reasoning drift can fragment formation and terminate the mission. 
In industrial robotic cells, the agents are networked manipulators and sensors executing synchronized assembly tasks; delayed resynchronization after a fault can halt production or cause mechanical conflict. 
In distributed financial trading systems, the agents are autonomous trading algorithms interacting through shared market data; slow recovery from an erroneous strategy or drifted policy can amplify instability and cause systemic loss. 
In cybersecurity defense networks, the agents are detection and classification models monitoring distributed endpoints; long recovery from false positives or misclassifications can block legitimate activity and reduce coverage. 
Finally, in modern LLM-based multi-agent AI systems the agents are reasoning modules coordinating tool use, memory, and planning; high cognitive recovery time leads to inconsistent outputs and degraded trust. 
Across these domains, the reliability of the MAS depends on preventing every fault and on how rapidly the system detects, isolates, and restores coherent operation once a fault occurs. 

While our empirical evaluation focuses on LLM-based agents, the proposed reliability framework applies more broadly to multi-agent AI systems, including planning agents, autonomous systems, and hybrid learning-based controllers.

\vspace{0.5em}
In practical operational deployments, such orchestration-level dependability requires not only measurement but also structured runtime control. When agent workflows drift, loop, or violate safety rules, the system must detect, interrupt, and safely recover-leaving an auditable trace of reasoning and actions. This motivates the need for explicit runtime reliability mechanisms in distributed AI systems. Such layer may provide: Open Cognitive Telemetry (OCT), Causal Drift Graph, Trust, Drift, and Recovery Metrics, Policy Hooks and Enforcement, and Audit $\&$ Compliance Interface.

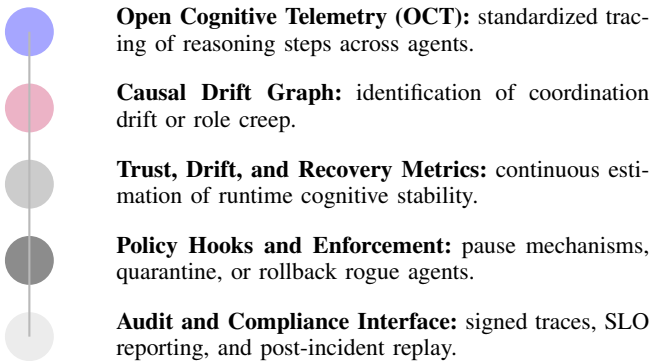
\begin{figure}[ht]
\raggedright
\begin{tikzpicture}[>=Latex, node distance=0.4cm, every node/.style={font=\small, align=left}] % compact spacing

% ---- Styles ----
\tikzstyle{circleblock}=[
    circle, draw=none,
    minimum size=0.65cm,
    inner sep=0pt
]

% ---- Start coordinate ----
\coordinate (xstart) at (0,0);

% ---- Circles (compact vertical layout) ----
\node[circleblock, fill=blue!35, below=0.2cm of xstart, anchor=center] (OCT) {};
\node[circleblock, fill=purple!30, below=0.35cm of OCT] (CDG) {};
\node[circleblock, fill=gray!40, below=0.35cm of CDG] (TDR) {};
\node[circleblock, fill=black!45, below=0.35cm of TDR] (PHE) {};
\node[circleblock, fill=gray!15, below=0.35cm of PHE] (AUDIT) {};

% ---- Vertical connecting line ----
\draw[line width=0.8pt, gray!55] (OCT.center) -- (AUDIT.center);

% ---- Text nodes ----
\node[anchor=west, right=0.7cm of OCT]
(txt1){\parbox[t]{0.8\linewidth}{
\textbf{Open Cognitive Telemetry (OCT):}
standardized tracing of reasoning steps across agents.}};

\node[anchor=west, right=0.7cm of CDG]
(txt2){\parbox[t]{0.8\linewidth}{
\textbf{Causal Drift Graph:}
identification of coordination drift or role creep.}};

\node[anchor=west, right=0.7cm of TDR]
(txt3){\parbox[t]{0.8\linewidth}{
\textbf{Trust, Drift, and Recovery Metrics:}
continuous estimation of runtime cognitive stability.}};

\node[anchor=west, right=0.7cm of PHE]
(txt4){\parbox[t]{0.8\linewidth}{
\textbf{Policy Hooks and Enforcement:}
pause mechanisms, quarantine, or rollback rogue agents.}};

\node[anchor=west, right=0.7cm of AUDIT]
(txt5){\parbox[t]{0.8\linewidth}{
\textbf{Audit and Compliance Interface:}
signed traces, SLO reporting, and post-incident replay.}};

\end{tikzpicture}

\caption{Key operational modules of the Cognitive Reliability Layer, providing runtime monitoring, drift control, and runtime control for multi-agent orchestration.}
\label{fig:cognitive-layer-nobox}
\end{figure}

These operational elements illustrate the broader system context in which MTTR-A becomes essential, as it provides a measure for cognitive recovery and runtime reliability in large-scale MAS. This is a measurable basis for runtime resilience in distributed reasoning systems. 
Importantly, MTTR-A evaluates recovery behavior at the orchestration layer without requiring visibility into the internal reasoning of each agent or the semantics of their communications. It assumes that cognitive faults are inevitable in open, distributed systems, and focuses instead on how quickly the orchestration loop detects, isolates, and restores coherence-treating recovery latency itself as the measurable indicator of cognitive reliability.

This work situates reliability engineering within the domain of agentic cognition, linking classical fault-tolerance principles with reflexive control architectures for LLM-based MAS. 
Combined with a taxonomy of recovery reflexes and an empirical LangGraph benchmark, establishes a foundation for measuring and improving cognitive reliability in MAS.
The empirical benchmark employs a real-text retrieval environment based on the AG News dataset to ensure that reasoning-drift and recovery cycles are observed under natural language conditions.

\subsection{Our Approach: A Reliability Metricization Framework}
This paper positions MTTR-A as a reliability metric for MAS. 
Rather than evaluating semantic accuracy or reasoning quality, we focus on \emph{runtime dependability} -how quickly and predictably a distributed agent workflow detects, isolates, and recovers from cognitive faults. 
It unifies a reflex taxonomy, a latency-based metric family (MTTR-A, MTBF, NRR), and a LangGraph-based telemetry pipeline into a reproducible methodology for benchmarking agentic reliability.
This hierarchical reliability structure treats each agent as a subsystem composed of reflexive recovery components (rollback, replan, retry, approve), 
while the orchestration layer governs overall system-level dependability.

\subsection{Contributions}
This work makes four main contributions:
\begin{enumerate}[noitemsep]
\item \textbf{Adaptation of classical reliability metrics to cognitive orchestration:}
We extend established dependability measures-MTTR, MTBF, and related ratios into the cognitive domain, defining MTTR-A as a runtime metric for reasoning recovery in multi-agent systems.
\item \textbf{A taxonomy of runtime recovery reflexes:}
We systematize reflexive control actions into five categories:-recovery, human-in-the-loop, runtime control, coordination, and safety-providing a standardized vocabulary for cognitive fault management and orchestration behavior.
\item \textbf{An empirical LangGraph simulation for cognitive reliability:}
We implement and release a LangGraph-based experiment (200 runs) that evaluates reasoning drift, recovery latency, and normalized reliability (NRR) across reflex modes.

\item \textbf{Formal reliability theorems:}
We derive two theoretical results linking runtime metrics to long-run cognitive uptime. 
First, under an alternating-renewal model \cite{barlow1965reliability}, we prove that $\mathrm{NRR}_{sys}$ provides a conservative lower bound 
on the steady-state cognitive uptime (Theorem~\ref{thm:nrr_bounds_uptime}). 
Second, we extend this relation to include recovery-time variance, establishing a confidence-aware lower bound 
that incorporates uncertainty in cognitive recovery (Theorem~\ref{thm:variance_bound}).

\end{enumerate}

\section{Related Work}
\textbf{LLM Agents and Deliberation:}
Chain-of-Thought, ReAct, Toolformer, and Tree-of-Thoughts \cite{yao2023tot} provide prompting, acting, and tool-use patterns for single agents. 
Beyond single-agent prompting, multi-agent lines-Generative Agents, CAMEL, Reflexion \cite{shinn2023reflexion}, and Voyager \cite{wang2023voyager}-together with recent surveys \cite{xi2023agentsurvey} map collaboration benefits and failure modes (e.g., coordination errors, unsafe tool calls, memory divergence).
These works primarily improve \emph{ex-ante} deliberation and coordination; they do not quantify \emph{ex-post} recovery latency or runtime dependability once reasoning has drifted.

Recent studies have expanded MAS collaboration paradigms
through reflection and debate mechanisms -such as Reflective Multi-Agent
Collaboration~\cite{copper2024reflective}, Chain-of-Agents~\cite{chainagents2024}, and MAGIS~\cite{magis2024}.

While these frameworks enhance coordination quality and communication efficiency,
they remain performance-oriented and do not formalize runtime dependability or
recovery-latency metrics as addressed in the present work.

\textbf{Resilient and Secure MAS:} Decades of MAS work study consensus, fault tolerance, and security: resilient consensus under DoS and Byzantine settings \cite{liu2023secureconsensus,wang2023resilientconsensus}, fault-tolerant control \cite{gao2022ftcc}, adaptive/secure consensus \cite{liu2023secureconsensus}, and broader resilient modelling \cite{varga2020resilient}. These works provide metrics, models, and control laws but do not standardize \emph{runtime recovery primitives} for LLM-centric agent orchestration.

These analyses reinforce the motivation for a quantitative dependability metric - directly measures recovery dynamics rather than task performance or interaction quality \cite{renkhoff2024survey}.

\vspace{0.4em}
Existing multi-agent AI systems such as Guardrails and AutoGen partially address reliability but operate at different layers. Guardrails enforces schema and safety validation \emph{after} text generation, while AutoGen structures multi-agent conversations and role interactions. Neither defines runtime recovery reflexes once coordination faults or reasoning drift occur. We introduce a control vocabulary that acts \emph{within the execution loop}-linking telemetry, policy, and reflex actions such as rollback or sandbox-execute. 

Recent multi-agent AI systems such as AutoGen, OpenAgents, and emerging control-plane studies including ALAS and Advancing Agentic Systems~\cite{adv_agentic_2024} extend LLM coordination to multi-agent workflows, but they primarily address communication and execution graphs rather than quantitative 
recovery metrics such as MTTR-A. 

Complementary benchmarking efforts evaluate multimodal professional workflow agents, yet omit explicit
reliability or recovery-time indicators, underscoring the absence of quantitative runtime dependability benchmarks in current agent evaluations.

In contrast to prior work on agent deliberation, coordination, or resilience, this paper focuses explicitly on recovery latency as a first-class reliability primitive for AI systems. Existing studies evaluate performance, robustness, or consensus properties, but do not define or measure the temporal dynamics of cognitive recovery following semantic or coordination failures. To our knowledge, MTTR-A is the first system-level metric that quantifies recovery time in distributed AI reasoning systems and links it formally to long-run cognitive uptime.

\section{Method}
This section formalizes the orchestration control method underlying the MTTR-A framework.
We propose a reflex-oriented architecture for runtime recovery in MAS,
where each cognitive fault triggers an automatic or human-mediated reflex action.
The architecture defines five families of recovery reflexes and organizes them within a closed feedback loop
linking telemetry, meta-monitoring, and execution (Figure~\ref{fig:control-loop-fixed}, Table~\ref{tab:taxonomy}).
Each recovery cycle produces measurable latency components from which MTTR-A
and related dependability metrics are computed (see Section~\ref{sec:mttra}).

\subsection{Reflex Taxonomy and Control Loop}
Each reflex family is defined by a \emph{trigger}, a \emph{policy}, and an \emph{effect on system state}.
Telemetry streams supply signals such as trust degradation, drift detection, or compliance violations,
which activate the appropriate reflex within the control loop.
The overall pipeline forms a reflex-based feedback system that continuously monitors cognition,
detects deviations, and enforces corrective actions.

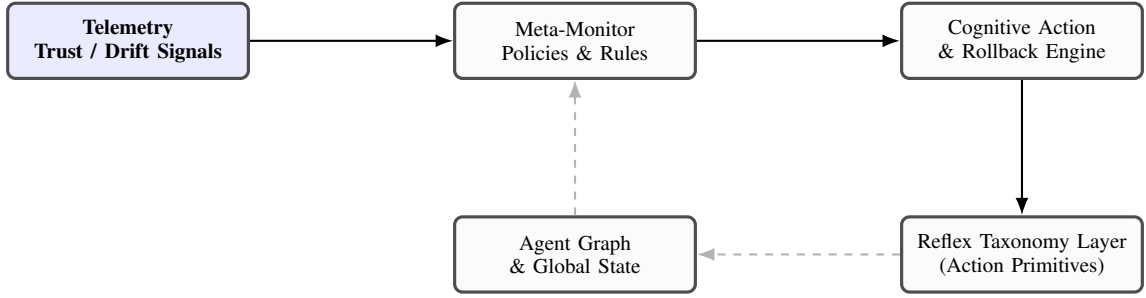
\begin{figure*}[t]
\centering
\begin{tikzpicture}[node distance=2.6cm,>=Latex,thick,every node/.style={font=\footnotesize,align=center}]

% --- Box style ---
\tikzstyle{blk}=[rectangle,rounded corners=3pt,draw=black!70,very thick,
minimum width=3.2cm,minimum height=1.0cm,fill=gray!3]

% --- Nodes (top row) ---
\node[blk,fill=blue!8,font=\bfseries\footnotesize] (tele) {Telemetry \\ Trust / Drift Signals};
\node[blk,right=2.7cm of tele] (monitor) {Meta-Monitor \\ Policies \& Rules};
\node[blk,right=2.7cm of monitor] (engine) {Cognitive Action \\ \& Rollback Engine};

% --- Nodes (bottom row) ---
\node[blk,below=1.8cm of monitor] (state) {Agent Graph \\ \& Global State};
\node[blk,below=1.8cm of engine] (actions) {Reflex Taxonomy Layer \\ (Action Primitives)};

% --- Arrows (main flow) ---
\draw[->,thick] (tele) -- (monitor);
\draw[->,thick] (monitor) -- (engine);
\draw[->,thick] (engine) -- (actions);

% --- Feedback arrows ---
\draw[->,dashed,gray!60] (actions.west) -- (state.east);
\draw[->,dashed,gray!60] (state.north) -- (monitor.south);

\end{tikzpicture}
\caption{
Closed-loop control architecture linking telemetry, policy monitoring, action execution,
and global state feedback. Arrows represent the data flow between perception (telemetry),
decision (meta-monitor), and action (execution engine),
forming a reflex-based feedback system for the Multi-Agent System (MAS).
}
\label{fig:control-loop-fixed}
\end{figure*}

\begin{table*}[h!]
\centering
\caption{Reflex taxonomy and corresponding recovery actions.
Each category defines runtime mechanisms for detection, intervention,
and safe continuation within the orchestration control loop of the MAS.}
\label{tab:taxonomy}
\small
\renewcommand{\arraystretch}{1.25}
\setlength{\tabcolsep}{6pt}
\begin{tabular}{@{}p{2.5cm} p{3.2cm} p{4.2cm} p{4.5cm}@{}}
\toprule
\textbf{Category} & \textbf{Action} & \textbf{Purpose} & \textbf{Typical Trigger} \\ \midrule

\textbf{Recovery} &
auto-replan, rollback, tool-retry, fallback-policy, safe-mode &
Regenerate plans, restore checkpoints, or retry failed tools using simplified policies and minimal verified behaviors. &
Planning failure, drift event, tool timeout, repeated failure, or safety constraint. \\[3pt]

\textbf{Human-in-the-Loop} &
human-approve, human-override, human-review, escalate-to-expert &
Introduce manual oversight, approval, or escalation to experts for compliance-critical or ambiguous cases. &
Compliance or high-impact step, anomaly detected, or specialized task requiring human judgment. \\[3pt]

\textbf{runtime control} &
auto-diagnose, self-heal, confidence-gate, vote-or-consensus, sandbox-execute, drift-rollback &
Perform self-diagnosis, pause and validate, isolate risky actions, or revert models when drift or uncertainty is detected. &
Anomaly pattern, recoverable environment fault, low confidence, divergent outcomes, or drift threshold exceeded. \\[3pt]

\textbf{Coordination} &
broadcast-update, negotiate-task, sync-state, lock/release-resource &
Maintain distributed coherence across the MAS through shared context propagation, task negotiation, and resource synchronization. &
Global context change, workload imbalance, context divergence, or resource contention. \\[3pt]

\textbf{Safety} &
graceful-abort, force-terminate, audit-snapshot &
Ensure controlled or emergency shutdown with full trace capture for audit and recovery verification. &
Controlled shutdown condition, unrecoverable error, or compliance requirement. \\

\bottomrule
\end{tabular}
\vspace{0.5em}
\end{table*}

\paragraph{Behavioral Scope:}
The reflex families in Table~\ref{tab:taxonomy} define complementary layers of recovery control, from automatic replanning to human-in-the-loop oversight and safety shutdown. Collectively, they ensure that cognitive drift, tool faults, and coordination errors are corrected without systemic collapse.

\subsection{Runtime Policy and Coordination Dynamics}
Within a MAS, recovery latency is shaped by the type of reflex invoked and also by the efficiency of policy selection and inter-agent coordination.
The meta-monitor chooses the fastest valid recovery action based on telemetry triggers, such as tool errors, confidence loss, or detected drift, while policy compliance.
Agents coordinate through a shared global context, synchronizing memory and actions during recovery to prevent cascading inconsistencies.
These decision and communication latencies collectively define the measured MTTR-A: the time interval from fault detection to complete restoration of coherent operation across the MAS.

The decision process that governs reflex selection maps fault triggers-such as tool errors, low confidence, or reasoning drift-to 
their corresponding recovery actions within the control loop.

Reducing MTTR-A therefore depends on minimizing both the policy-selection delay and the coordination overhead that emerge during distributed recovery.

\section{Experiment and Evaluation Metrics}
\label{sec:mttra}

This section presents both the theoretical framework and the experimental configuration 
used to evaluate the proposed runtime reliability metrics for MAS. 
First, we establish the conceptual motivation and formal definitions of MTTR-A and related measures that quantify cognitive recovery and dependability. 
Then, we describe the experimental setup and data pipeline used to empirically compute these metrics 
in a simulated MAS built with the LangGraph framework.

\subsection{Evaluation Metrics and Cognitive Dependability Context}
The proposed reliability metrics were evaluated using the LangGraph framework, 
which enables stateful, graph-based MAS workflows built on LLMs. The experimental setup simulated reasoning drift, reflex activation, and recovery cycles 
across multiple interacting agents, allowing quantitative assessment of runtime cognitive dependability.

We draw directly from classical dependability theory \cite{bass2020sre}, 
which formalizes reliability using MTTR, MTBF, and availability ratios. 
In classical terms, MTTR measures the expected duration between system failure 
and full restoration of service under a binary up-down model. 
In contrast, cognitive systems such as MAS operate continuously, with faults that are semantic, 
distributed, and partially recoverable. 
We therefore reinterpret these dependability metrics at the level of cognitive orchestration, quantifying the efficiency with which a MAS detects, isolates, and restores reasoning coherence following a coordination or drift event.

\subsection{Formal Reliability Metricization Framework}
We adapt classical dependability measures to the MAS domain, 
defining \emph{MTTR-A} (Mean Time-to-Recovery for Agentic Systems) 
as a runtime measure of cognitive recovery latency. 
While traditional MTTR quantifies infrastructural repair time, 
MTTR-A captures the temporal efficiency of \emph{cognitive recovery}-the process 
by which a distributed reasoning network identifies drift and restores coherent operation.

\paragraph{System-of-Agents Model:}
We model the orchestrated workflow as a \textbf{system of agents} 
$\mathcal{S} = \{\mathcal{A}_1, \mathcal{A}_2, \ldots, \mathcal{A}_N\}$, 
where each agent $\mathcal{A}_i$ functions as an autonomous \emph{subsystem} 
with its own local reasoning, detection, and recovery dynamics. 
Each subsystem executes internal reflex components $c_{ik}$ (e.g., 
\texttt{tool-retry}, \texttt{auto-replan}, \texttt{rollback}, \texttt{human-approve}) 
that act as \emph{components} of the overall recovery architecture. 
Thus, MAS reliability is described hierarchically:
\begin{itemize}[noitemsep]
    \item \textbf{Component level:} reflex-mode recovery latency ($MTTR\text{-}A_{\text{mode}}$)
    \item \textbf{Subsystem level:} per-agent metrics ($MTTR\text{-}A_i$, $MTBF_i$)
    \item \textbf{System level:} aggregate orchestration metrics ($MTTR\text{-}A_{sys}$, $MTBF_{sys}$, $NRR_{sys}$)
\end{itemize}
This hierarchy mirrors systems engineering principles, where each subsystem contributes 
independently to overall reliability.

\paragraph{Formal Definition:}
Let $\{(t_{f,i}^{(j)}, t_{r,i}^{(j)})\}_{j=1}^{M_i}$ denote the set of 
$M_i$ fault-recovery episodes for agent $\mathcal{A}_i$, 
where $t_{f,i}^{(j)}$ is the timestamp at which a cognitive fault is detected 
and $t_{r,i}^{(j)}$ is when reasoning coherence is re-established. 
The local recovery duration is:
\begin{equation}
\Delta T_i^{(j)} = t_{r,i}^{(j)} - t_{f,i}^{(j)}.
\label{eq:deltaT}
\end{equation}
The per-agent \textbf{Mean Time-To-Recovery} is then:
\begin{equation}
\mathrm{MTTR\text{-}A}_i = 
\frac{1}{M_i}\sum_{j=1}^{M_i}\!\left(t_{r,i}^{(j)} - t_{f,i}^{(j)}\right)
= \mathbb{E}[\Delta T_i],
\label{eq:mttra_mean}
\end{equation}
and the system-level value, aggregating across $N$ agents, is defined as:
\begin{equation}
\mathrm{MTTR\text{-}A}_{sys} = 
\frac{1}{N}\sum_{i=1}^{N}\mathrm{MTTR\text{-}A}_i.
\label{eq:mttra_sys}
\end{equation}

\paragraph{Robust Estimator:}
As recovery-time distributions in MAS are typically right-skewed 
due to occasional long-latency interventions (e.g., human approvals), 
we also suggest using a robust estimator, the \textbf{Median Time-To-Recovery}:
\begin{equation}
\mathrm{MedTTR\text{-}A_{sys}} =
\operatorname{median}_{1 \le j \le M}\!\left[\Delta T^{(j)}\right].
\label{eq:mttra_median}
\end{equation}
This mitigates the influence of rare but extreme events 
and complements MTTR-A as a distribution-invariant measure.

\paragraph{Latency Decomposition:}
Each recovery episode, at either agent or system level, can be decomposed into additive latency components:
\begin{equation}
\Delta T^{(j)} = 
T_{\mathrm{detect}}^{(j)} + 
T_{\mathrm{decide}}^{(j)} + 
T_{\mathrm{execute}}^{(j)},
\label{eq:decomposition}
\end{equation}
where $T_{\mathrm{detect}}$ is the drift detection latency, 
$T_{\mathrm{decide}}$ is the policy-selection delay, 
and $T_{\mathrm{execute}}$ is the reflex execution time.
\begin{figure*}[h!]
\centering
\begin{tikzpicture}[
  >=Latex,
  every node/.style={font=\scriptsize, align=center},
  box/.style={rectangle, rounded corners=2pt, draw=gray!70, thick,
              fill=gray!3, minimum width=1.8cm, minimum height=0.7cm}
]

% --- Nodes ---
\node[box, fill=gray!10] (fault) {Drift\\Detected};
\node[box, fill=blue!10, right=1.7cm of fault] (policy) {Trigger\\Policy};
\node[box, fill=purple!10, right=1.7cm of policy] (recovery) {Recovery\\Action};
\node[box, fill=gray!10, right=1.7cm of recovery] (stable) {Stable\\State};

% --- Arrows (labels placed separately) ---
\draw[->, thick, blue!70] (fault) -- (policy);
\node[font=\tiny, text=blue!70, above=2pt of $(fault)!0.5!(policy)$] {Detection};

\draw[->, thick, purple!70] (policy) -- (recovery);
\node[font=\tiny, text=purple!70, above=2pt of $(policy)!0.5!(recovery)$] {Decision};

\draw[->, thick, gray!70] (recovery) -- (stable);
\node[font=\tiny, text=gray!70, above=2pt of $(recovery)!0.5!(stable)$] {Execution};

% --- MTTR-A label and horizontal arrow ---
\node[font=\scriptsize\bfseries, text=blue!70, left=0.35cm of fault] (label) {MTTR-A};
\draw[<->, thick, blue!50!purple!60] 
  ([yshift=-0.35cm]fault.south west) -- ([yshift=-0.35cm]stable.south east);

\end{tikzpicture}

\caption{Compact MTTR-A timeline showing detection (blue), decision (purple), and execution (gray) phases leading to recovery. The horizontal arrow below represents the total MTTR-A duration.}
\label{fig:mttra-timeline}
\end{figure*}
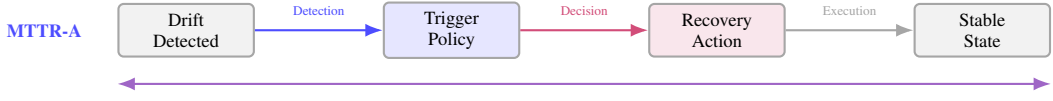

\paragraph{Complementary Metrics:}
Two additional dependability metrics complete the MAS reliability framework:
\begin{enumerate}[noitemsep]
\item \textbf{Mean Time Between Cognitive Faults (MTBF):} 
average stable duration between drift events:
\begin{equation}
\mathrm{MTBF}_i =
\frac{1}{M_i}
\sum_{j=1}^{M_i}
\!\left(
t^{(j)}_{\mathrm{fault},i} -
t^{(j-1)}_{\mathrm{recovered},i}
\right),
\label{eq:mtbf}
\end{equation}
and the system-level MTBF:
\begin{equation}
\mathrm{MTBF}_{sys} =
\frac{1}{N}\sum_{i=1}^{N}\mathrm{MTBF}_i.
\label{eq:mtbf_sys}
\end{equation}
\item \textbf{NRR:} 
a dimensionless runtime reliability index:
\begin{equation}
\mathrm{NRR}_{sys} = 1 - 
\frac{\mathrm{MTTR\text{-}A}_{sys}}{\mathrm{MTBF}_{sys}}.
\label{eq:nrr}
\end{equation}
Values near $1$ indicate MAS that recover rapidly relative to fault rate, 
while values approaching $0$ or negative imply unstable recovery behavior.
\end{enumerate}

We adapt the classical MTBF-MTTR relationship into a normalized,
runtime reliability indicator for cognitive orchestration/ MAS,
termed the \emph{Normalized Recovery Ratio (NRR)}.
While the underlying form derives from traditional availability approximations, its application here is newly adopted: the NRR is directly measurable from reasoning-drift and recovery events in MAS, providing a practical indicator of cognitive uptime during runtime operation.

We next formalize this relationship under the alternating-renewal model of dependability theory \cite{bass2020sre}, showing that $\mathrm{NRR}_{sys}$ acts as a
conservative lower bound on the system’s long-run cognitive uptime.

\begin{definition}[Steady-State Fraction of Cognitive Uptime]
\label{def:steady_state}
Let the MAS operate as an alternating sequence of 
cognitively stable (\emph{up}) and recovery (\emph{down}) periods over time.
Denote by $A_{\mathrm{up}}(T)$ the total duration of coherent reasoning 
within the time interval $[0, T]$. 
The \emph{steady-state fraction of cognitive uptime} is defined as the long-run ratio
\begin{equation}
\pi_{\mathrm{up},sys} := 
\lim_{T \to \infty} \frac{A_{\mathrm{up}}(T)}{T},
\label{eq:steady_state_def}
\end{equation}
whenever the limit exists. 
It represents the asymptotic proportion of time during which the MAS 
maintains reasoning coherence, analogous to classical system availability.
\end{definition}

\begin{theorem}[NRR lower-bounds steady-state cognitive uptime]
\label{thm:nrr_bounds_uptime}
Under the alternating-renewal model of dependability theory,
the steady-state fraction of cognitive uptime in a MAS satisfies
\begin{equation}
\begin{aligned}
\pi_{\mathrm{up},sys}
&= \frac{\mathrm{MTBF}_{sys}}
        {\mathrm{MTBF}_{sys}+\mathrm{MTTR}_{sys}}
 = \frac{1}{1+\lambda_{sys}\mu_{sys}} \\
&\ge 1 - \lambda_{sys}\mu_{sys}
 = \mathrm{NRR}_{sys}.
\end{aligned}
\end{equation}

where $\lambda_{sys} = 1/\mathrm{MTBF}_{sys}$ 
and $\mu_{sys} = \mathrm{MTTR}_{sys}$.
Therefore, the normalized recovery ratio $\mathrm{NRR}_{sys}$ 
provides a conservative lower bound on the system’s 
steady-state cognitive uptime.
The inequality follows from the elementary bound 
$(1+x)^{-1} \ge 1-x$ for all $x \ge 0$.
A proof is provided in Appendix~\ref{app:proofs}.
\end{theorem}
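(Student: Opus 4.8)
The plan is to establish the central equality through renewal theory and then obtain the stated bound from a one-line elementary inequality. First I would make explicit the alternating-renewal structure implicit in Definition~\ref{def:steady_state}: the MAS alternates between cognitively stable (up) intervals $U_1, U_2, \ldots$ and recovery (down) intervals $D_1, D_2, \ldots$, where the up intervals are i.i.d.\ with mean $\mathbb{E}[U] = \mathrm{MTBF}_{sys}$ and the down intervals are i.i.d.\ with mean $\mathbb{E}[D] = \mathrm{MTTR}_{sys}$, both assumed finite and strictly positive. A full cycle $C_k = U_k + D_k$ then has mean $\mathbb{E}[C] = \mathrm{MTBF}_{sys} + \mathrm{MTTR}_{sys}$, which is the natural denominator in the target expression.

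The core step is to invoke the renewal-reward theorem. Treating each cycle as a renewal epoch and assigning to cycle $k$ the reward $U_k$ (the time spent coherent), the accumulated uptime $A_{\mathrm{up}}(T)$ is a renewal-reward process, so with probability one
\begin{equation}
\pi_{\mathrm{up},sys} = \lim_{T\to\infty}\frac{A_{\mathrm{up}}(T)}{T} = \frac{\mathbb{E}[U]}{\mathbb{E}[C]} = \frac{\mathrm{MTBF}_{sys}}{\mathrm{MTBF}_{sys}+\mathrm{MTTR}_{sys}}.
\end{equation}
This simultaneously guarantees that the limit in Equation~\eqref{eq:steady_state_def} exists and fixes its value. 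Substituting $\lambda_{sys} = 1/\mathrm{MTBF}_{sys}$ and $\mu_{sys} = \mathrm{MTTR}_{sys}$ and dividing numerator and denominator by $\mathrm{MTBF}_{sys}$ yields the rate-form identity $\pi_{\mathrm{up},sys} = (1+\lambda_{sys}\mu_{sys})^{-1}$.

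The final step is immediate and deterministic: setting $x = \lambda_{sys}\mu_{sys} \ge 0$, the elementary bound $(1+x)^{-1} \ge 1-x$ follows by clearing the positive denominator, since $(1+x)(1-x) = 1-x^2 \le 1$. Hence $\pi_{\mathrm{up},sys} \ge 1 - \lambda_{sys}\mu_{sys} = \mathrm{NRR}_{sys}$, closing the chain. I would note in passing that the bound is valid even in the unstable regime $x > 1$, where $1-x < 0 \le \pi_{\mathrm{up},sys}$ makes it trivially true, consistent with the paper's remark that negative NRR signals degraded recovery.

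I expect the main obstacle to lie not in the algebra but in justifying the renewal-reward step under the paper's minimal assumptions. The MAS is a distributed collection of agents, and the aggregate up/down intervals need not be genuinely i.i.d.\ or independent across agents; moreover, the system-level quantities in Equations~\eqref{eq:mttra_sys} and~\eqref{eq:mtbf_sys} are defined as \emph{averages of per-agent means} rather than as moments of the aggregate system's cycle, so these need not coincide a priori. To make the proof rigorous I would either (i) state the i.i.d.\ (or stationary-ergodic) assumption on the system-level cycles explicitly as a hypothesis and apply the corresponding ergodic version of the renewal-reward theorem, or (ii) adopt the modeling convention that the orchestration layer behaves as a single alternating-renewal process with effective mean up-time $\mathrm{MTBF}_{sys}$ and mean down-time $\mathrm{MTTR}_{sys}$. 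Pinning down precisely which independence or ergodicity condition underlies ``the alternating-renewal model'' is the one place requiring care; once that assumption is fixed, everything downstream is routine.
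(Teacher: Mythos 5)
Your proposal is correct and follows essentially the same route as the paper's own proof in Appendix~A: the renewal--reward theorem applied to i.i.d.\ up/down cycles yields $\pi_{\mathrm{up},sys} = \mathbb{E}[U]/(\mathbb{E}[U]+\mathbb{E}[D]) = (1+\lambda_{sys}\mu_{sys})^{-1}$, followed by the elementary bound $(1+x)^{-1} \ge 1-x$ (your $(1+x)(1-x) = 1-x^2 \le 1$ is the same algebra as the paper's $\frac{1}{1+a}-(1-a)=\frac{a^2}{1+a}\ge 0$). Your closing caveat---that the system-level quantities in Equations~\eqref{eq:mttra_sys} and~\eqref{eq:mtbf_sys} are averages of per-agent means and need not equal the aggregate cycle moments unless an i.i.d.\ or stationary-ergodic hypothesis on system-level cycles is imposed---is a fair observation about the modeling assumptions, and the paper handles it exactly as your option (ii) suggests, by positing the i.i.d.\ alternating-renewal structure directly at the system level.
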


\begin{definition}[Confidence-aware Normalized Recovery Ratio ($\mathrm{NRR}_\alpha$)]
Let $R$ denote the random recovery duration of the MAS, with mean 
$\mu = \mathbb{E}[R] = \mathrm{MTTR\text{-}A}_{sys}$ and standard deviation 
$\sigma = \sqrt{\mathrm{Var}[R]}$. 
For any confidence level $\alpha \in (0,1)$, define 
\[
k_\alpha = \sqrt{\frac{1-\alpha}{\alpha}}, 
\qquad
R_\alpha := \mu + k_\alpha \sigma.
\]
Then, the \emph{confidence-aware normalized recovery ratio} is
\begin{equation}
\mathrm{NRR}_\alpha := 1 - \lambda_{sys} R_\alpha,
\qquad \text{where } \lambda_{sys} = 1/\mathrm{MTBF}_{sys}.
\end{equation}
This extends the classical $\mathrm{NRR}_{sys}$ by explicitly incorporating 
the variance of recovery times, providing a confidence-graded lower bound on 
runtime cognitive uptime.
\end{definition}

\begin{theorem}[Variance-aware lower bound on cognitive uptime]
\label{thm:variance_bound}
Under the same alternating-renewal model of dependability theory,
for any confidence level $\alpha \in (0,1)$ the steady-state fraction of 
cognitive uptime satisfies
\begin{equation}
\pi_{\mathrm{up},sys} \ge 1 - \lambda_{sys}(\mu + k_\alpha \sigma)
= \mathrm{NRR}_\alpha,
\end{equation}
with probability at least $\alpha$.
A complete proof is provided in Appendix~\ref{app:proof_variance_bound}.
\end{theorem}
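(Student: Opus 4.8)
The plan is to reduce the variance-aware statement to the mean-based bound of Theorem~\ref{thm:nrr_bounds_uptime} by replacing the mean recovery time with a calibrated high-probability ceiling on the random recovery duration. First I would record the structural fact underlying Theorem~\ref{thm:nrr_bounds_uptime}: under the alternating-renewal model the steady-state uptime has the form $\pi_{\mathrm{up},sys} = (1 + \lambda_{sys} r)^{-1}$ when the recovery time driving the renewal cycle is $r$, and this expression is monotonically decreasing in $r$. Combined with the elementary inequality $(1+x)^{-1} \ge 1-x$ for $x \ge 0$ (already invoked in Theorem~\ref{thm:nrr_bounds_uptime}), this yields $\pi_{\mathrm{up},sys} \ge 1 - \lambda_{sys} r$ for any admissible recovery time $r$. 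Thus it suffices to exhibit a value $R_\alpha$ that upper-bounds the effective recovery time with the stated confidence and substitute it for $r$.

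The second step is the tail estimate that pins down $R_\alpha$. Because only the mean $\mu$ and variance $\sigma^2$ of $R$ are assumed known, the natural tool is the one-sided Chebyshev (Cantelli) inequality, $\Pr[R - \mu \ge t] \le \sigma^2/(\sigma^2 + t^2)$ for $t > 0$. Setting $t = k_\alpha \sigma$ and using $k_\alpha^2 = (1-\alpha)/\alpha$ makes the right-hand side collapse to $\sigma^2/(\sigma^2 + k_\alpha^2\sigma^2) = (1 + k_\alpha^2)^{-1} = \alpha$, so the complementary event $\{R \le R_\alpha\}$ with $R_\alpha = \mu + k_\alpha\sigma$ carries the required confidence. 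This is precisely the role the definition of $k_\alpha$ is designed to play: the constant is chosen so that the Cantelli tail equals $\alpha$ with no slack, giving the tightest distribution-free ceiling obtainable from two moments alone.

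To assemble the result I would condition on the event $\{R \le R_\alpha\}$: on this event the recovery duration entering the renewal cycle is at most $R_\alpha$, so monotonicity together with the linearization of the first paragraph gives $\pi_{\mathrm{up},sys} \ge (1 + \lambda_{sys} R_\alpha)^{-1} \ge 1 - \lambda_{sys} R_\alpha = \mathrm{NRR}_\alpha$, holding with the probability assigned to that event.

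I expect the main obstacle to be conceptual rather than computational: the steady-state uptime in Definition~\ref{def:steady_state} is a deterministic long-run limit whose alternating-renewal value depends on the \emph{mean} recovery time, whereas the theorem asserts a statement ``with probability,'' so the proof must be explicit about what carries the randomness (a single recovery realization $R$) and how a per-episode bound is transferred, via monotonicity, to the availability functional. Care is also needed in the probability bookkeeping, since the Cantelli calibration controls the tail event $\{R \ge R_\alpha\}$ at level $\alpha$ and hence the favorable event $\{R \le R_\alpha\}$ at level $1-\alpha$; stating the guarantee therefore requires fixing the convention for the confidence level consistently with the definition of $k_\alpha$. Once that bridge is made precise, the remaining work is the routine Cantelli computation above.
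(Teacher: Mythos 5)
Your proposal follows essentially the same route as the paper's appendix proof of Theorem~\ref{thm:variance_bound}: Cantelli's one-sided inequality calibrated at $k_\alpha$, monotonicity of $r \mapsto (1+\lambda_{sys}r)^{-1}$ in the recovery time, and the elementary linearization $(1+x)^{-1} \ge 1-x$ already used for Theorem~\ref{thm:nrr_bounds_uptime}. In that sense there is nothing missing from your argument.

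One point deserves emphasis, because the caution in your closing paragraph is not merely hygiene --- it identifies a genuine inversion in the paper's own calibration. As you compute, with $k_\alpha = \sqrt{(1-\alpha)/\alpha}$ Cantelli gives $\Pr[R-\mu \ge k_\alpha\sigma] \le (1+k_\alpha^2)^{-1} = \alpha$, hence $\Pr[R \le \mu + k_\alpha\sigma] \ge 1-\alpha$; the appendix nevertheless asserts $\Pr[R \le \mu + k_\alpha\sigma] \ge \alpha$, which follows from this calibration only in the trivial regime $\alpha \le 1/2$. For the intended high-confidence reading (e.g.\ $\alpha = 0.95$) one must either take $k_\alpha = \sqrt{\alpha/(1-\alpha)}$ or restate the guarantee at level $1-\alpha$; your proposal's insistence on ``fixing the convention for the confidence level consistently with the definition of $k_\alpha$'' is exactly the needed repair. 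Your other caveat is also well placed: under the renewal--reward theorem, $\pi_{\mathrm{up},sys}$ in Definition~\ref{def:steady_state} is a deterministic limit determined by $\mathbb{E}[R]=\mu$, so the paper's identity $\pi_{\mathrm{up},sys} = (1+\lambda_{sys}R)^{-1}$ with a single random $R$, and hence the ``with probability at least $\alpha$'' phrasing, requires the explicit accounting of randomness (a per-episode realization transferred through monotonicity, as in your conditioning step) that the published proof glosses over. Your blind attempt is thus the same proof, executed more carefully than the original.
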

\begin{figure}[h!]
\centering
\includegraphics[width=0.9\linewidth]{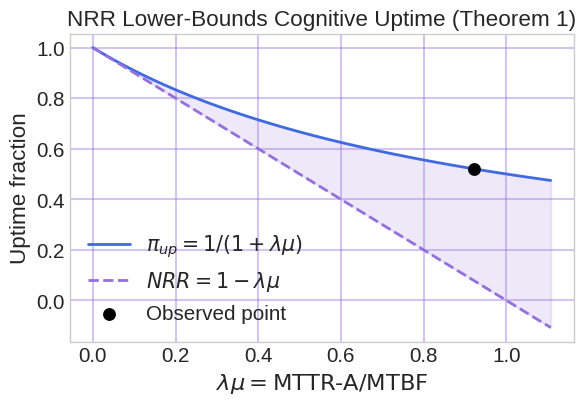}
\caption{
\textbf{NRR and variance-aware bounds on cognitive uptime.}
The blue curve shows the analytical steady-state uptime 
$\pi_{\mathrm{up},sys} = 1/(1+\lambda_{sys}\mu_{sys})$,
the dashed purple line shows the first-order approximation 
$\mathrm{NRR}_{sys} = 1-\lambda_{sys}\mu_{sys}$,
and the gray shaded band represents the confidence-adjusted bounds 
$\mathrm{NRR}_\alpha$ from Theorem~\ref{thm:variance_bound}.
}

\label{fig:nrr-uptime}
\end{figure}

The adapted metrics 
$\mathrm{MTTR\text{-}A}_i$, $\mathrm{MedTTR\text{-}A}_i$, $\mathrm{MTBF}_i$, and $\mathrm{NRR}_i$ 
(for each agent), and their system-level counterparts 
$\mathrm{MTTR\text{-}A}_{sys}$, $\mathrm{MTBF}_{sys}$, and $\mathrm{NRR}_{sys}$, 
establish a unified quantitative foundation for evaluating 
\emph{runtime cognitive dependability} in MAS.

\subsection{Experimental Setup}
The experimental evaluation employed LangGraph to simulate cognitive drift, 
reflex activation, and recovery across interacting agents. 
The following configuration was used.

\subsection{Algorithmic Procedure for Measuring MTTR-A}
Algorithm~\ref{alg:mttra} defines a general procedure for measuring
cognitive reliability metrics in multi-agent systems.
It is agnostic to the underlying framework or dataset and can be applied 
to any orchestrated workflow that supports fault detection and recovery tracing.
In our experiments, this procedure was instantiated using the LangGraph
framework and the AG~News corpus to simulate reasoning-drift-recovery cycles.

\begin{algorithm}[h!]
\caption{General Procedure for Computing Cognitive Reliability Metrics in Multi-Agent Systems}
\label{alg:mttra}
\begin{algorithmic}[1]
\Require
Multi-agent system $\mathcal{S}$, fault detection policy $\pi_{\mathrm{detect}}$, 
recovery reflex set $\mathcal{R}$, number of iterations $N$.
\Ensure
System-level $\mathrm{MTTR\text{-}A}_{sys}$, $\mathrm{MTBF}_{sys}$, $\mathrm{NRR}_{sys}$.

\For{$i = 1$ to $N$}
    \State Observe system state; apply $\pi_{\mathrm{detect}}$ to determine fault occurrence.
    \If{fault detected}
        \State Trigger appropriate recovery reflex $r \in \mathcal{R}$
        \State Measure detection, decision, and execution latencies
        \State Compute total recovery time $\Delta T_i = 
               T_{\mathrm{detect}} + T_{\mathrm{decide}} + T_{\mathrm{execute}}$
    \EndIf
\EndFor
\State Estimate $\mathrm{MTTR\text{-}A}_{sys} = \operatorname{median}(\Delta T_i)$
\State Estimate $\mathrm{MTBF}_{sys}$ from inter-fault intervals
\State Compute $\mathrm{NRR}_{sys} = 1 - 
      \dfrac{\mathrm{MTTR\text{-}A}_{sys}}{\mathrm{MTBF}_{sys}}$
\State \Return $\{\mathrm{MTTR\text{-}A}_{sys}, \mathrm{MTBF}_{sys}, \mathrm{NRR}_{sys}\}$
\end{algorithmic}
\end{algorithm}
\noindent
In our implementation, Algorithm~\ref{alg:mttra} was instantiated in the 
LangGraph multi-agent AI system as a three-node workflow representing the
reasoning-drift-recovery cycle. The nodes operated sequentially as follows:

\textit{(1)}~\texttt{reasoning\_node} retrieved documents from the AG~News corpus
and computed a retrieval confidence score,
\begin{equation}
c = \cos(\mathbf{q}, \mathbf{d}_{\mathrm{top}}) 
= \frac{\mathbf{q} \cdot \mathbf{d}_{\mathrm{top}}}
{\|\mathbf{q}\|\|\mathbf{d}_{\mathrm{top}}\|},
\label{eq:confidence}
\end{equation}
where $\mathbf{q}$ is the query vector and $\mathbf{d}_{\mathrm{top}}$ is the
highest-scoring document. 

\textit{(2)}~\texttt{check\_drift\_node} compared $c$ to the drift threshold
$\tau_{\mathrm{drift}} = 0.6$ and flagged a cognitive fault whenever 
$c < \tau_{\mathrm{drift}}$ (or via small stochastic perturbations).  

\textit{(3)}~\texttt{recovery\_node} executed one of four reflex modes-
\texttt{auto-replan}, \texttt{rollback}, \texttt{tool-retry}, or
\texttt{human-approve}-selected according to weighted policy sampling and
simulated decision and execution latencies.

All state transitions were logged as structured telemetry (\texttt{.jsonl})
for offline computation of $\mathrm{MTTR\text{-}A}$, $\mathrm{MTBF}$, and
$\mathrm{NRR}$.  
Across 200 runs, timestamps for detection, decision, and execution were recorded,
yielding per-mode recovery latencies and system-level dependability metrics.
\subsection{Interpretation}
This simulation reproduces distributed reasoning scenarios 
where autonomous agents collaborate under uncertainty. 
In such environments, $MTTR-A_{sys}$ captures the MAS’s operational resilience, how rapidly reasoning coherence is restored after drift, providing a practical runtime indicator of cognitive reliability. Further implementation details, including computational environment, query pool, and configuration parameters, are provided in Appendix~\ref{app:Configuration}.

\section{Results and Discussion}
The overall MedTTR-A across 200 LangGraph runs on the AG News dataset was
6.21~s~$\pm$~2.14~s (90th percentile: 10.73~s),
indicating that typical cognitive recovery occurs within roughly six seconds after drift detection. The MTBF was
6.73~s~$\pm$~2.14~s, yielding an NRR of 0.077.
These results confirm that the orchestration maintained stable recovery cycles with
approximately 8\% of runtime devoted to corrective action and no evidence of cascading
instability.

These metrics indicate that the agentic workflow maintained stable recovery 
cycles with roughly 10\% of runtime spent on corrective actions and no simulated evidence of cascading instability. Per-mode results are summarized in Table~\ref{tab:mttra-summary}.

\begin{table}[h!]
\centering
\caption{Experimental results on the AG News dataset (N = 200 runs).
Each value represents the median recovery time per reflex mode.}
\label{tab:mttra-summary}
\small
\begin{tabular}{@{}lcccc@{}}
\toprule
\textbf{Recovery Mode} & \textbf{MedTTR-A (s)} & \textbf{Std (s)} & \textbf{P90 (s)} & \textbf{Count} \\ \midrule
auto-replan     & 5.94 & 0.70 & 6.81 & 93 \\
tool-retry      & 4.46 & 0.61 & 5.40 & 42 \\
rollback        & 6.99 & 0.43 & 7.45 & 44 \\
human-approve   & 12.22 & 0.68 & 12.77 & 21 \\ \bottomrule
\end{tabular}
\end{table}

\subsection*{Distribution and Recovery Behavior}
Figure \ref{fig:mttra-dist-box} summarizes both the overall distribution and per-mode latency comparison. The results follow a right-skewed profile centered near 6-7~s, 
with most automated recoveries completing rapidly and a small tail corresponding to human oversight. 
This shape reflects high orchestration responsiveness and low variance across the experiment. Fully automated reflexes (\texttt{tool-retry}, \texttt{auto-replan}) restored stability within 5-7~s on average,
while \texttt{rollback} operations showed slightly higher median times but reduced variability.
The \texttt{human-approve} mode remained the slowest due to manual gating.

\begin{figure}[h!]
\centering
\includegraphics[width=0.48\linewidth]{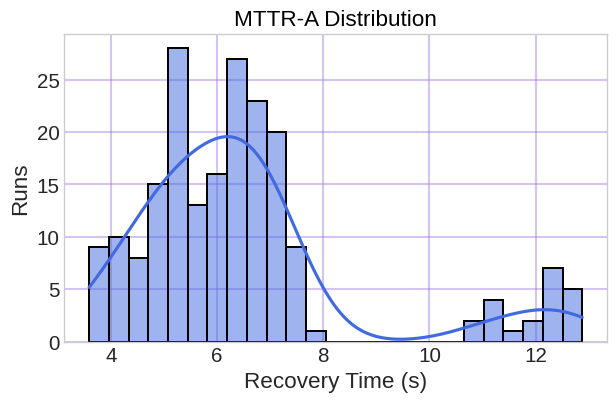}
\includegraphics[width=0.48\linewidth]{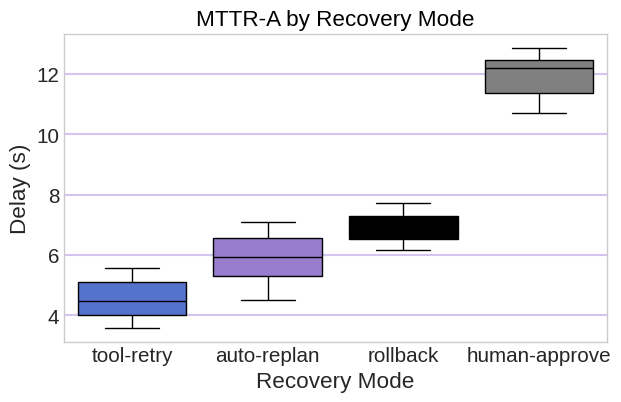}
\caption{
\textbf{Distribution and Mode Comparison of $MTTR\text{-}A_{sys}$.}
Left: overall histogram and kernel density of recovery latency across 200 runs.
Right: boxplots per reflex strategy (\texttt{tool-retry}, \texttt{auto-replan}, 
\texttt{rollback}, \texttt{human-approve}).}
\label{fig:mttra-dist-box}
\end{figure}

\subsection*{Temporal Stability Across Runs}
To assess temporal stability, we analyzed the rolling 20-run MTTR-A average over the 200 experimental iterations. The results indicate steady recovery performance without systematic escalation, suggesting stable reflex control and consistent orchestration efficiency throughout the experiment. Minor fluctuations are attributable to stochastic variation in recovery mode selection rather than degradation in system behavior.

To analyze recovery behavior, MTTR-A was decomposed into three additive components: detection, decision, and execution latency. Across all reflex modes, execution time constitutes the dominant contribution to recovery cost, with the effect most pronounced in human-approval cases. Decision latency remains consistently low, indicating that policy selection overhead within the LangGraph orchestration layer is negligible relative to reflex execution time.

Fully automated reflexes, such as auto-replan and tool-retry, restore stability within approximately 5–7 s with low variance, supporting high-throughput operation. Human-in-the-loop interventions introduce expected runtime control latency but provide necessary compliance and control in high-risk scenarios.

\subsection*{Comparison to Existing Metrics}

In classical dependability research \cite{bass2020sre}, 
the MTTR quantifies the average duration between a system failure 
and full service restoration.  
This formulation assumes binary operational states and deterministic repair procedures. We adapt these classical notions to \emph{agentic cognition}, 
where recovery refers to reasoning correction, consensus restoration, or human oversight 
rather than physical service uptime. Traditional MTTR and MTBF describe infrastructure recovery, not cognitive correction.  
MTTR-A therefore complements-not replaces-these metrics by quantifying recovery latency 
within multi-agent reasoning loops, bridging classical runtime monitoring with cognitive control performance.

In LLM-based agents, quality depends on plan consistency, tool correctness, and collaborative coherence.  
Methods such as ReAct and Tree-of-Thoughts improve \emph{ex-ante} reasoning 
\cite{yao2022react,yao2023tot}, whereas our approach focuses on \emph{ex-post} reflexes-actions that 
restore safe state, replan, or escalate after reasoning divergence.  
From MAS literature, resilient consensus ensures state agreement under faults or attacks, 
but does not model human-approval gates, sandbox execution, or audit snapshots required in regulated operational workflows.

MTTR-A can serve as a building block for higher-level aggregate indicators, such as 
a \emph{Mean Cognitive Uptime (MCU)} or a \emph{Cognitive Reliability Index (CRI)}-which define measurable 
targets for adaptive AI performance and runtime control.  

\section*{Limitations}

The present evaluation relies on a semi-simulated environment with a synthetic corpus and
controlled reflex latencies, isolating runtime dependability from semantic variability. Future validation should include open-world orchestration traces (for example, customer-service
and data-analysis agents) to test MTTR-A under heterogeneous latency and partial runtime monitoring.
Integrating semantic accuracy metrics such as task-success, BLEU, or F1 alongside MTTR-A
would allow joint evaluation of reasoning quality and recovery efficiency, establishing practical
trade-offs between cognitive performance and dependability.

\section{Conclusion and Future Work}
This work adapts classical reliability principles to the cognitive domain of multi-agent systems,
proposing a practical framework for evaluating \textbf{runtime cognitive dependability}.
By extending traditional dependability metrics-MTTR, MTBF, and availability-into the cognitive
context, we defined \textbf{MTTR-A} as a measurable indicator of reasoning recovery latency
and introduced the complementary metrics \textbf{MTBF} and \textbf{NRR} for continuous
assessment of orchestration stability.

The proposed framework integrates a taxonomy of reflexive control actions with empirical
evaluation in LangGraph, establishing a reproducible basis for quantifying how distributed
agentic systems detect, recover, and stabilize after reasoning drift or coordination faults.
This connection between fault-tolerance engineering and cognitive orchestration lays
the groundwork for standardized runtime stability benchmarks across LLM-based multi-agent
systems.

Future work will extend this foundation to real-world heterogeneous teams, integrating
semantic performance metrics with runtime dependability measures to assess trade-offs
between reasoning quality and recovery efficiency. Further directions include exploring
runtime control-aware reliability objectives, safety guarantees for reflexive control loops,
and alignment with emerging standards for operational AI reliability. In practice, MTTR-A enables organizations to define service-level objectives (SLOs) for agentic systems based on recovery behavior rather than task accuracy alone.

\appendix
\numberwithin{equation}{section}

\section{Proofs}
\label{app:proofs}

\begin{proof}[Proof of Theorem~\ref{thm:nrr_bounds_uptime}]
Consider the alternating-renewal process $\{(U_n,D_n)\}_{n\ge 1}$ for the MAS,
with $U_n$ (cognitive up-time) and $D_n$ (recovery down-time) i.i.d. and
$\mathbb{E}[U]=\mathrm{MTBF}_{sys}<\infty$, $\mathbb{E}[D]=\mathrm{MTTR}_{sys}<\infty$.
Define $\lambda_{sys}:=1/\mathrm{MTBF}_{sys}$ and $\mu_{sys}:=\mathrm{MTTR}_{sys}$.

By the Renewal-Reward Theorem, the steady-state fraction of cognitive up-time is
\begin{equation}
\begin{aligned}
\pi_{\mathrm{up},sys}
&= \frac{\mathbb{E}[U]}
        {\mathbb{E}[U]+\mathbb{E}[D]}
 = \frac{\mathrm{MTBF}_{sys}}
        {\mathrm{MTBF}_{sys}+\mathrm{MTTR}_{sys}} \\
&= \frac{1}{1+\lambda_{sys}\mu_{sys}}.
\end{aligned}
\label{eq:renewal-uptime}
\end{equation}

The normalized recovery ratio is defined by
\begin{equation}
\mathrm{NRR}_{sys} := 1-\lambda_{sys}\mu_{sys}.
\label{eq:nrr-def}
\end{equation}
Let $a:=\lambda_{sys}\mu_{sys}\ge 0$. Then
\begin{equation}
\frac{1}{1+a}-(1-a)
= \frac{a^2}{1+a} \;\ge\; 0,
\label{eq:gap-nonneg}
\end{equation}
with equality iff $a=0$. Combining \eqref{eq:renewal-uptime}-\eqref{eq:gap-nonneg} yields
\begin{equation}
\pi_{\mathrm{up},sys}
= \frac{1}{1+\lambda_{sys}\mu_{sys}}
\;\ge\;
1-\lambda_{sys}\mu_{sys}
= \mathrm{NRR}_{sys}.
\label{eq:uptime-lb}
\end{equation}
Moreover, the first-order expansion $(1+a)^{-1}=1-a+\mathcal{O}(a^{2})$ implies
\begin{equation}
\pi_{\mathrm{up},sys}
= \mathrm{NRR}_{sys} + \mathcal{O}\!\left((\lambda_{sys}\mu_{sys})^{2}\right),
\quad \text{as } \lambda_{sys}\mu_{sys}\to 0.
\label{eq:first-order}
\end{equation}
This proves that $\mathrm{NRR}_{sys}$ is a conservative lower bound on steady-state
cognitive up-time and a first-order approximation thereof.
\end{proof}

\begin{proof}[Proof of Theorem~\ref{thm:variance_bound}]
\label{app:proof_variance_bound}
Let $R$ denote the random recovery duration of a single cognitive-fault episode. 
By Cantelli’s one-sided inequality, for any $k>0$,
\begin{equation}
\Pr[R - \mu \ge k\sigma] \le \frac{1}{1 + k^2},
\label{eq:cantelli_app}
\end{equation}
where $\mu=\mathbb{E}[R]$ and $\sigma=\sqrt{\mathrm{Var}[R]}$.
Setting $k = k_\alpha = \sqrt{(1-\alpha)/\alpha}$ yields
\[
\Pr[R \le \mu + k_\alpha \sigma] \ge \alpha.
\]
Define $R_\alpha = \mu + k_\alpha \sigma$. 
With probability at least $\alpha$, the recovery duration satisfies $R \le R_\alpha$.

The steady-state fraction of cognitive uptime for an alternating-renewal process is
\begin{equation}
\pi_{\mathrm{up},sys} = \frac{1}{1 + \lambda_{sys}R},
\label{eq:pi_appendix}
\end{equation}
where $\lambda_{sys}=1/\mathrm{MTBF}_{sys}$ is the fault intensity.
Since $(1+x)^{-1}$ is monotonically decreasing for $x>0$, 
inequality \eqref{eq:cantelli_app} implies
\begin{equation}
\pi_{\mathrm{up},sys} \ge \frac{1}{1+\lambda_{sys}R_\alpha},
\quad \text{with probability at least } \alpha.
\end{equation}
Finally, applying the elementary inequality $(1+x)^{-1} \ge 1-x$ for $x\ge0$ gives
\begin{equation}
\pi_{\mathrm{up},sys} 
\ge 1 - \lambda_{sys} R_\alpha
= 1 - \lambda_{sys}(\mu + k_\alpha \sigma)
= \mathrm{NRR}_\alpha.
\end{equation}
Thus, with probability at least $\alpha$, the steady-state cognitive uptime 
exceeds the confidence-aware lower bound $\mathrm{NRR}_\alpha$, completing the proof.
\end{proof}

\section{Query Pool}
\label{app:querypool}

For completeness, the following query pool was used to induce reasoning drift 
and test recovery reflexes within the LangGraph experimental workflow. 
These orchestration-level queries represent typical coordination, runtime control, 
and recovery scenarios observed in agentic systems.

\begin{lstlisting}
# === Query Pool ===
QUERY_POOL = [
  "LangGraph recovery reflexes",
  "agent orchestration reliability",
  "rollback sandbox audit snapshots",
  "tool retries and backoff",
  "consensus voting disagreement",
  "policy thresholds and approvals",
  "runtime control and risk tiers",
  "runtime monitoring telemetry signals",
  "drift detection and confidence",
  "mttra and mtbf calculation",
  "normalized reliability index",
  "incident playbooks escalation",
  "global memory reconciliation",
  "safe mode fallback routes",
  "retrieval reranking grounding",
  "planning decomposition tools"
]
\end{lstlisting}

Each query was submitted through the reasoning-drift-recovery cycle defined in 
Section~\ref{sec:mttra}, ensuring consistent cognitive fault induction 
and enabling reproducible measurement of recovery latency and MTTR-A dynamics.

\section{Experimental Configuration}
\label{app:Configuration}
All experiments were executed on Google Colab with an NVIDIA A100 GPU, 
using Python~3.10 and the LangGraph~0.0.52 library. 
Each run instantiated a stateful graph of three nodes 
(\emph{Reasoning}, \emph{Drift Check}, and \emph{Recovery}), 
executed 200 independent iterations with random seeds fixed to ensure reproducibility. 

% ===================== References =====================
\small
\bibliographystyle{IEEEtran}

\bibliography{references}

\begin{IEEEbiography}[{\includegraphics[width=1in,height=1.25in,clip,keepaspectratio]{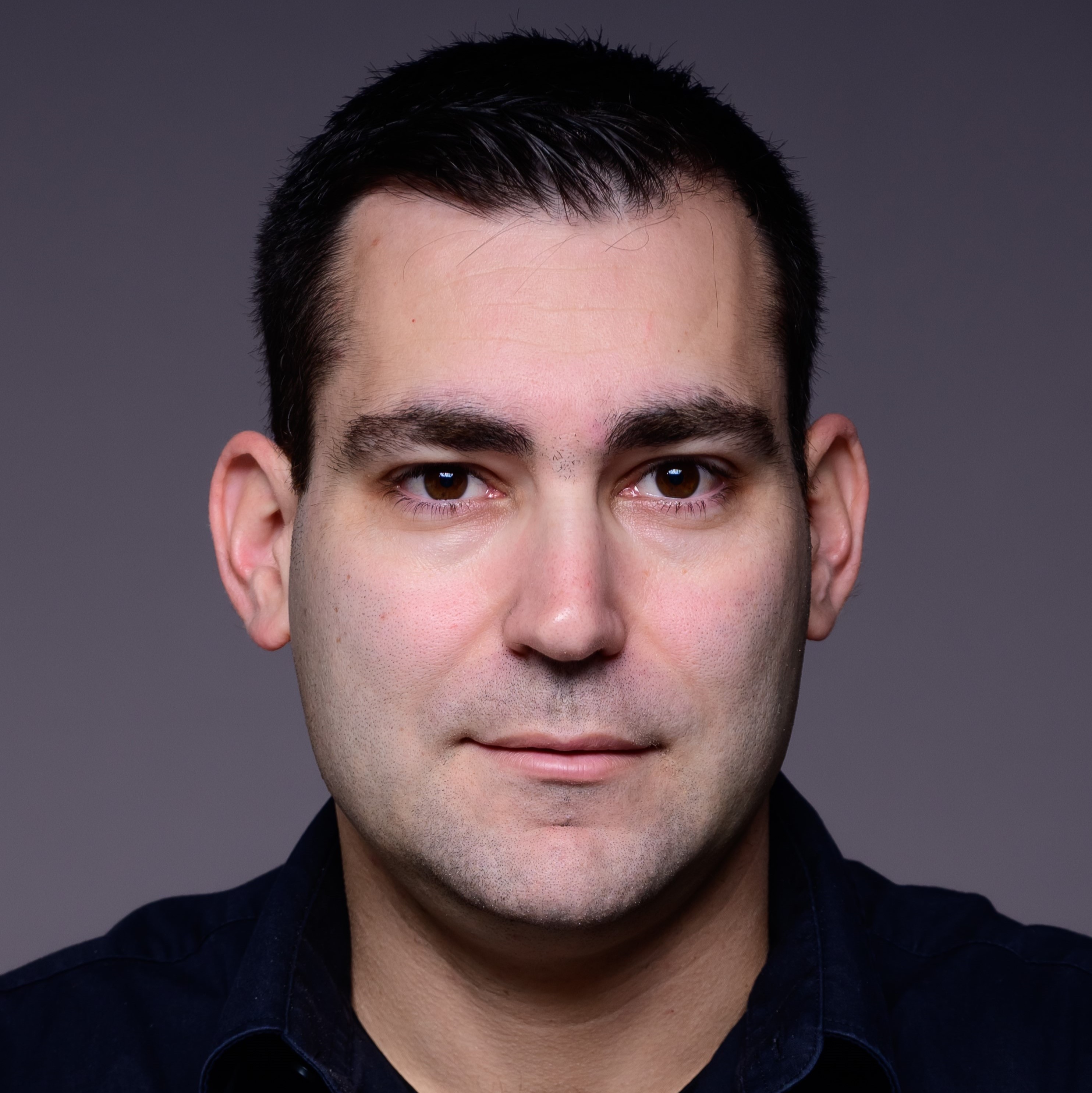}}]{Barak Or} (Member, IEEE) received a B.Sc. degree in aerospace engineering (2016), a B.A. degree (cum laude) in economics and management (2016), and an M.Sc. degree in aerospace engineering (2018) from the Technion–Israel Institute of Technology. He graduated with a Ph.D. degree from the University of Haifa, Haifa (2022).
His research interests include navigation, deep learning, sensor fusion, and estimation theory.
\end{IEEEbiography}

\end{document}